\newtheorem{theorem}{Theorem}
\begin{document}

\title{\bf BOSONIC PAIRINGS}
\author{P. L. Robinson}
\date{}
\maketitle 
\date{}

\begin{abstract}
We extend the inner product from bosonic Fock space to a pairing between suitable antifunctionals on the symmetric algebra. Our account is illustrated by the (Gaussian) half-form pairing between positive polarizations in the form needed for geometric quantization. 
\end{abstract}

\medbreak 

\section*{Introduction}

\medbreak 

The standard framework in which to discuss a free bosonic system rests on the completion of the symmetric algebra over the underlying one-particle complex Hilbert space. In conventional approaches, one is often faced with figurative expressions which only assume an honest existence when appropriate conditions are satisfied. A specific instance of this scenario occurs in the context of the classical Shale \cite{Shale} theorem on the unitary implementation of a symplectic automorphism in the bosonic Fock representation: a conventional approach to this theorem hinges on the corresponding displaced Fock vacuum, a figurative Gaussian that lies in Fock space precisely when the symplectic automorphism and the one-particle complex structure have Hilbert-Schmidt commutator. In \cite{R boson} we developed a framework in which such figurative objects have a strictly legitimate existence and thereby offered a new account of the Shale theorem. Our purpose here is to show that this framework also supports rigorously extending the Fock space inner product to a pairing between suitable formerly figurative expressions: specifically, between suitable (not necessarily bounded) antifunctionals on the symmetric algebra; such a pairing is important for the theory of geometric quantization, as discussed in \cite{Kostant} and \cite {RR}. 

\medbreak

For convenience, we recall familiar foundational material on bosonic Fock space, perhaps from an unfamiliar perspective. The precise formulation offered here was first presented in \cite{R boson} though a more traditional reference such as \cite{BSZ} or \cite{BraRob} may also be consulted for some of the details. 

\medbreak

Let $V$ be a complex Hilbert space with $\langle \; \vert \; \rangle$ as its inner product. For the sake of simplicity, we shall suppose that $V$ has finite complex dimension $m$. 

\medbreak

The symmetric algebra $SV = \bigoplus_{d \geqslant 0} S^d V$ is graded by degree and carries a canonical inner product relative to which the homogeneous summands are perpendicular, the Fock vacuum $\mathbf{1} \in S^0 V = \mathbb{C}$ is a unit vector and if $x_1, \dots , x_d, y_1, \dots ,y_d \in V$ then 
	\[\langle x_1 \cdots x_d \vert y_1 \cdots y_d \rangle = \sum_{p} \prod_{j = 1}^{d} \langle x_j \vert y_{p(j)} \rangle 
\]
where $p$ runs over all permutations of $\{1, \dots ,d\}$. In particular, if $x, y \in V$ then 
	\[\langle x^d \vert y^d \rangle = d! \: \langle x \vert y \rangle^d 
\]
and if $V$ has unitary basis $(v_1, \dots , v_m)$ then $SV$ has unitary basis $\{v^D : D \in {\mathbb N}^m \}$ where if $D = (d_1, \dots ,d_m) \in {\mathbb N}^m$ then 
	\[v^D = \frac{v_1 ^{d_1} \cdots v_m ^{d_m}}{\sqrt{d_1! \cdots d_m!}}\: .
\]

\medbreak

Let $SV'$ be the full (purely algebraic) antidual of the symmetric algebra, comprising all (not necessarily bounded) antilinear functionals $SV \rightarrow \mathbb{C}$. The antidual $SV'$ is naturally a commutative associative complex algebra: its product is defined by the rule that if $\Phi, \Psi \in SV'$ then 
	\[\theta \in SV \Longrightarrow [\Phi \Psi](\theta) = [\Phi \otimes \Psi](\Delta \theta)
\]
where the cocommutative coproduct $\Delta : SV \rightarrow SV \otimes SV$ is the composite $SV \rightarrow S(V \oplus V) \rightarrow SV \otimes SV$ in which the first map is induced by the diagonal $V \rightarrow V \oplus V$ and the second is the canonical isomorphism. Note that the grading on $SV$ gives each $\Phi \in SV'$ the structure of a formal series, thus
	\[\Phi = \sum_{d \geqslant 0} \Phi_d
\]
where if $d \geqslant 0$ then $\Phi_d = \Phi \vert S^d V$. In the opposite direction, if to each $d \geqslant 0$ is associated an element $\Phi_d \in S^d V'$ then the formal series above defines an element of $SV'$ because individual elements of $SV$ vanish in sufficiently high degree. 

\medbreak 

The canonical inner product on the symmetric algebra embeds it in its antidual: it is readily verified that the canonical map 
	\[SV \rightarrow SV' : \phi \mapsto \langle \cdot \vert \phi \rangle	
\]
is an injective algebra homomorphism. In these terms, bosonic Fock space $S[V] = \bigoplus_{d \geqslant 0} S^d [V]$ may be defined either as the Hilbert space completion of $SV$ or as the subspace of $SV'$ comprising all bounded antifunctionals. Note that if $\Phi, \Psi \in S[V]$ then their inner product is given by 
	\[\langle \Phi \vert \Psi \rangle = \sum_{d \geqslant 0} \langle \Phi_d \vert \Psi_d \rangle.
\]

\medbreak

A proper treatment of Fock space includes a discussion of the Fock representation in terms of creators and annihilators; as such a treatment is not neccessary for our purposes, we again refer to \cite{R boson}, \cite{BraRob} or \cite{BSZ} for details. 

\medbreak 

\section*{Bosonic pairings}

\medbreak 

Recall that the homogeneous summands in the symmetric algebra $SV$ are mutually perpendicular relative to its standard inner product and that the corresponding remark holds true for the bosonic Fock space $S[V]$. This being so, we are led to define a bosonic pairing $\langle \ : \ \rangle_1$ between suitable elements of the full antidual $SV'$ as follows, at least in preliminary form. For $\Phi, \Psi \in SV'$ we define 
	\[\langle \Phi : \Psi \rangle_1 = \sum_{d \geqslant 0} \langle \Phi_d \vert \Psi_d \rangle
\]
whenever the indicated series is convergent. A little later, we shall extend this definition; for now, we consider properties of the bosonic pairing as defined in the present sense. 

\medbreak

According to our recollection, this pairing extends the standard inner product from $SV$ through $S[V]$ to a partially-defined inner product on $SV'$. The bosonic pairing also reproduces the canonical pairing between $SV$ and $SV'$. 

\medbreak 

\begin{theorem} 
If $\phi \in SV$ and $\Psi \in SV'$ then 
	\[\langle \phi : \Psi \rangle_1 = \Psi (\phi).
\]
\end{theorem}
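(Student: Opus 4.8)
The plan is to reduce the asserted identity to a degree-by-degree statement and then to recognise that statement as a graded version of the Riesz identification built into the definition of the antidual. First I would write $\phi = \sum_{d \geqslant 0} \phi_d$ with $\phi_d \in S^d V$; since $\phi \in SV$, all but finitely many $\phi_d$ vanish. Write $\Phi = \langle \cdot \vert \phi \rangle$ for the image of $\phi$ under the canonical embedding $SV \hookrightarrow SV'$. Because the homogeneous summands of $SV$ are mutually perpendicular, the degree-$d$ component of $\Phi$ is $\Phi_d = \langle \cdot \vert \phi_d \rangle$; in particular $\Phi_d = 0$ for all large $d$, so the series $\sum_{d \geqslant 0} \langle \Phi_d \vert \Psi_d \rangle$ defining $\langle \phi : \Psi \rangle_1$ has only finitely many nonzero terms. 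Hence it converges, the pairing is defined, and it remains only to evaluate it.

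The key step is the claim that, for each $d$, any $\alpha \in S^d V$ and any $F \in S^d V'$, one has $\langle \langle \cdot \vert \alpha \rangle \vert F \rangle = F(\alpha)$, where the inner product on the left is the one carried by $S^d V'$ as a homogeneous summand of the antidual (equivalently, of $S[V]$). Since $V$, hence $S^d V$, is finite-dimensional, $F = \langle \cdot \vert \beta \rangle$ for a unique $\beta \in S^d V$; the canonical map $S^d V \to S^d V'$ is then a linear isomorphism and the inner product on $S^d V'$ is the one transported along it, so the left-hand side equals $\langle \alpha \vert \beta \rangle$, while the right-hand side is $\langle \cdot \vert \beta \rangle (\alpha) = \langle \alpha \vert \beta \rangle$ as well. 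The one point deserving care is the sesquilinearity bookkeeping: the canonical map is linear (indeed an algebra homomorphism), so transporting the inner product along it introduces no spurious complex conjugate and the two sides genuinely coincide rather than being conjugate to one another.

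Finally I would assemble the pieces. Taking $\alpha = \phi_d$ and $F = \Psi_d$ in the claim gives $\langle \Phi_d \vert \Psi_d \rangle = \Psi_d(\phi_d) = \Psi(\phi_d)$, the last equality holding because $\phi_d \in S^d V$ and $\Psi_d = \Psi \vert S^d V$. Summing over $d$ and using antilinearity of $\Psi$ on the finite sum defining $\phi$, \[ \langle \phi : \Psi \rangle_1 = \sum_{d \geqslant 0} \langle \Phi_d \vert \Psi_d \rangle = \sum_{d \geqslant 0} \Psi(\phi_d) = \Psi\Bigl( \sum_{d \geqslant 0} \phi_d \Bigr) = \Psi(\phi), \] as required. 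I do not expect a serious obstacle: the entire content is the compatibility of the grading with the canonical embedding, together with the observation that a finitely supported $\phi$ makes the defining series a finite sum; the argument is in essence a graded restatement of the Riesz representation encoded in the antidual pairing.
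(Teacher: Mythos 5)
Your proof is correct and takes essentially the same route as the paper's: both reduce to the finitely many nonzero homogeneous components of $\phi$, use additivity of $\Psi$, and rest on the degree-by-degree identity $\langle \Phi_d \vert \Psi_d \rangle = \Psi(\phi_d)$. You merely make explicit (via the finite-dimensional Riesz identification and the sesquilinearity bookkeeping) a step the paper treats as immediate from the definition of the canonical embedding.
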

\begin{proof} 
If $\phi = \sum_{d = 0}^D \phi_d$ then 
	\[\Psi (\phi) = \Psi (\sum_{d = 0}^D \phi_d) = \sum_{d = 0}^D \Psi (\phi_d) = \sum_{d = 0}^D \langle \phi_d \vert \Psi_d \rangle = \sum_{d \geqslant 0} \langle \phi_d \vert \Psi_d \rangle
\]
whence $\langle \phi : \Psi \rangle_1$ exists and has the indicated value. 
\end{proof}

\medbreak

Of course, it is likewise true that if $\Phi \in SV'$ and $\psi \in SV$ then 
	\[\langle \Phi : \psi \rangle_1 = \overline{\Phi (\psi)}.
\]

\medbreak

The bosonic pairing is defined in situations that involve naturally the number operator and its powers. Recall that the number operator $\mathcal{N}$ is defined initially on $SV$ (where it multiplies homogeneous elements by degree) and extends antidually to $SV'$. In Fock space $S[V]$, it is then defined as a self-adjoint operator on the natural domain 
	\[\mathbb{D} (\mathcal{N}) = \{ \Phi \in SV' : \sum_{d \geqslant 0} d^2 \Vert \Phi_d \Vert^2 < \infty \}.
\]
More generally, if $r \in \mathbb{R}$ then the power $\mathcal{N}^r$ has natural domain 
	\[\mathbb{D} (\mathcal{N}^r) = \{ \Phi \in SV' : \sum_{d \geqslant 0} d^{2r} \Vert \Phi_d \Vert^2 < \infty \}
\]
on which it has the effect 
	\[\Phi \in \mathbb{D} (\mathcal{N}^r) \Rightarrow \mathcal{N}^r (\Phi) = \sum_{d \geqslant 0} d^r \Phi_d
\]
with an appropriate understanding of the $d = 0$ term.

\medbreak 

\begin{theorem} 
Let $r \in \mathbb{R}$. If $\Phi \in \mathbb{D} (\mathcal{N}^{-r})$ and $\Psi \in \mathbb{D} (\mathcal{N}^r)$ then 
	\[\langle \Phi : \Psi \rangle_1 = \langle \mathcal{N}^{-r} \Phi \vert \mathcal{N}^r \Psi \rangle .
\]
\end{theorem}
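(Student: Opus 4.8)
The plan is to reduce everything to the Cauchy--Schwarz inequality in $\ell^2$ applied to the sequences of homogeneous components. First I would record that the hypotheses place $\mathcal{N}^{-r}\Phi$ and $\mathcal{N}^{r}\Psi$ inside Fock space $S[V]$: indeed $\mathcal{N}^{-r}\Phi$ has homogeneous components $(\mathcal{N}^{-r}\Phi)_d = d^{-r}\Phi_d$, so that $\sum_{d \geqslant 0} \Vert (\mathcal{N}^{-r}\Phi)_d \Vert^2 = \sum_{d \geqslant 0} d^{-2r}\Vert \Phi_d \Vert^2 < \infty$ precisely because $\Phi \in \mathbb{D}(\mathcal{N}^{-r})$, and likewise $\sum_{d \geqslant 0}\Vert(\mathcal{N}^{r}\Psi)_d\Vert^2 < \infty$ because $\Psi \in \mathbb{D}(\mathcal{N}^{r})$; here and below the $d = 0$ summand is read with the standing convention that $\mathcal{N}^{s}$ acts as the identity on $S^0 V$ for every $s \in \mathbb{R}$, so that $(\mathcal{N}^{s}\Phi)_0 = \Phi_0$.

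Next I would invoke the formula for the inner product on $S[V]$ recalled in the introduction to write
\[
\langle \mathcal{N}^{-r}\Phi \vert \mathcal{N}^{r}\Psi \rangle = \sum_{d \geqslant 0} \langle (\mathcal{N}^{-r}\Phi)_d \vert (\mathcal{N}^{r}\Psi)_d \rangle ,
\]
the series on the right being absolutely convergent because its $d$th term is dominated in absolute value by $\Vert (\mathcal{N}^{-r}\Phi)_d \Vert \, \Vert (\mathcal{N}^{r}\Psi)_d \Vert$ and both square-summability conditions have just been verified. For $d \geqslant 1$ the scalars $d^{-r}$ and $d^{r}$ are real and may be extracted from the two arguments of the inner product, giving
\[
\langle (\mathcal{N}^{-r}\Phi)_d \vert (\mathcal{N}^{r}\Psi)_d \rangle = \langle d^{-r}\Phi_d \vert d^{r}\Psi_d \rangle = d^{-r}\, d^{r}\, \langle \Phi_d \vert \Psi_d \rangle = \langle \Phi_d \vert \Psi_d \rangle ,
\]
and for $d = 0$ the convention yields the same identity directly. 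Hence $\sum_{d \geqslant 0} \langle \Phi_d \vert \Psi_d \rangle$ coincides term by term with $\sum_{d \geqslant 0} \langle (\mathcal{N}^{-r}\Phi)_d \vert (\mathcal{N}^{r}\Psi)_d \rangle$.

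It then only remains to observe that the left-hand series, being term by term identical to an absolutely convergent series, is itself convergent; so $\langle \Phi : \Psi \rangle_1$ exists in the sense of the preliminary definition and has the value $\langle \mathcal{N}^{-r}\Phi \vert \mathcal{N}^{r}\Psi \rangle$, as claimed. I do not anticipate a genuine obstacle here: the whole argument is an application of Cauchy--Schwarz together with bookkeeping about homogeneous components, and the only point that demands care is the treatment of the degree-zero term, where $\mathcal{N}^{-r}$ must be read by convention rather than literally when $r > 0$.
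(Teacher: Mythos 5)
Your proof is correct and follows essentially the same route as the paper's: the published proof simply observes that $\langle \Phi_d \vert \Psi_d \rangle = \langle d^{-r}\Phi_d \vert d^{r}\Psi_d \rangle$ for $d>0$ and sums. You have merely filled in the details the paper leaves tacit --- that $\mathcal{N}^{-r}\Phi$ and $\mathcal{N}^{r}\Psi$ lie in $S[V]$, that Cauchy--Schwarz then gives absolute convergence of the right-hand series, and the convention at $d=0$ --- which is a faithful expansion rather than a different argument.
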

\begin{proof} 
An elementary and direct calculation: if $d > 0$ then 
	\[\langle \Phi_d \vert \Psi_d \rangle = \langle d^{-r}\Phi_d \vert d^r\Psi_d \rangle
\]
whence summation concludes the argument.  
\end{proof}

\medbreak 

The bosonic pairing is also defined in situations of H\"older type. Let $p \geqslant 1$ be real and define 
	\[\mathbb{H}^p [V] = \{ \Phi \in SV' : \sum_{d \geqslant 0} \Vert \Phi_d \Vert^p < \infty \}.
\]
In addition, define $\mathbb{H}^\infty [V]$ to comprise all those $\Phi \in SV'$ for which the sequence $(\Vert \Phi_d \Vert)_{d = 0}^\infty$ is bounded. As a special case, note that $\mathbb{H}^2 [V]$ is precisely Fock space $S[V]$. 

\medbreak 

\begin{theorem} 
Let the indices $p \geqslant 1$ and $q \geqslant 1$ be conjugate in the sense that $\frac{1}{p} + \frac{1}{q} = 1$. If $\Phi \in \mathbb{H}^p [V]$ and $\Psi \in \mathbb{H}^q [V]$ then $\langle \Phi : \Psi \rangle_1$ is defined.
\end{theorem}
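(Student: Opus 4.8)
The plan is to reduce the statement to the classical Hölder inequality for sequences, the bosonic pairing being by design the $\ell^2$-type pairing of the coefficient sequences $(\Phi_d)$ and $(\Psi_d)$. First I would record that, since $V$ has finite dimension, each $S^d V$ is finite-dimensional and hence every antifunctional on it is bounded; in particular $\Phi_d, \Psi_d \in S^d[V]$ for every $d$, so that $\langle \Phi_d \vert \Psi_d \rangle$ is a genuine Hilbert-space inner product and the Cauchy--Schwarz inequality gives
\[
|\langle \Phi_d \vert \Psi_d \rangle| \leqslant \Vert \Phi_d \Vert \, \Vert \Psi_d \Vert .
\]

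For the main case $1 < p, q < \infty$ I would then apply Hölder's inequality to the non-negative sequences $(\Vert \Phi_d \Vert)_{d \geqslant 0} \in \ell^p$ and $(\Vert \Psi_d \Vert)_{d \geqslant 0} \in \ell^q$, obtaining
\[
\sum_{d \geqslant 0} |\langle \Phi_d \vert \Psi_d \rangle| \leqslant \sum_{d \geqslant 0} \Vert \Phi_d \Vert \, \Vert \Psi_d \Vert \leqslant \Bigl( \sum_{d \geqslant 0} \Vert \Phi_d \Vert^p \Bigr)^{1/p} \Bigl( \sum_{d \geqslant 0} \Vert \Psi_d \Vert^q \Bigr)^{1/q} < \infty .
\]
Thus the defining series $\sum_{d \geqslant 0} \langle \Phi_d \vert \Psi_d \rangle$ converges absolutely, a fortiori converges, and so $\langle \Phi : \Psi \rangle_1$ is defined.

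It remains to treat the endpoint, $p = 1$ with $q = \infty$ (and symmetrically $p = \infty$ with $q = 1$), where Hölder degenerates into the cruder estimate
\[
\sum_{d \geqslant 0} |\langle \Phi_d \vert \Psi_d \rangle| \leqslant \Bigl( \sup_{d \geqslant 0} \Vert \Psi_d \Vert \Bigr) \sum_{d \geqslant 0} \Vert \Phi_d \Vert < \infty ,
\]
which again forces absolute convergence. I do not foresee a genuine obstacle: the content is essentially a transcription of Hölder's inequality, and the only items meriting comment are the finite-dimensionality remark ensuring the summands $\langle \Phi_d \vert \Psi_d \rangle$ make sense, and the separate handling of the $\infty$-endpoint.
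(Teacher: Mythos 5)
Your proof is correct and follows essentially the same route as the paper: termwise Cauchy--Schwarz followed by the classical H\"older inequality for the sequences of norms, with the endpoint case $\{p,q\}=\{1,\infty\}$ handled by the obvious sup bound. The only addition is your (welcome) remark that finite-dimensionality of each $S^d V$ makes the summands genuine Hilbert-space inner products.
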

\begin{proof} 
Assume that $p > 1$ and $q > 1$. If $d \geqslant 0$ then 
	\[\vert \langle \Phi_d \vert \Psi_d \rangle \vert \leqslant \Vert \Phi_d \Vert \: \Vert \Psi_d \Vert
\]
by the Cauchy-Schwarz inequality, whence summation yields 
	\[\sum_{d \geqslant 0} \vert \langle \Phi_d \vert \Psi_d \rangle \vert \leqslant \{\sum_{d \geqslant 0} \Vert \Phi_d \Vert^p\}^{1/p} \: \sum_{d \geqslant 0} \Vert \Psi_d \Vert^q\}^{1/q}
\]
on account of the H\"older inequality. The case in which $\{ p, q \} = \{ 1, \infty \}$ is still more transparent. 
\end{proof}
	
\medbreak 

As a more specific illustration, let us consider the bosonic pairing of Gaussians. Thus, let the symmetric antilinear map $Z: V \rightarrow V$ and quadratic $\zeta \in S^2 V$ correspond as usual according to the rule 
	\[x, y \in V \Rightarrow \langle y \vert Z x \rangle = \langle x y \vert \zeta \rangle
\]
and consider the associated Gaussian 
	\[e^Z = \sum_{d \geqslant 0} \frac{\zeta^d}{d!} \in SV'.
\]

\medbreak 

The following is Theorem 2.13 in \cite{R boson} but we include a proof of it here for ease of reference. 

\medbreak 

\begin{theorem} \label{norm}
Let $Z : V \rightarrow V$ be symmetric antilinear. The Gaussian $e^Z$ lies in Fock space $S[V]$ precisely when $\Vert Z \Vert < 1$ and then 
	\[\Vert e^Z \Vert^2 = {\rm Det}^{1/2} (I - Z^2)^{-1}.
\]
\end{theorem}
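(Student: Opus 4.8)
The plan is to diagonalize the symmetric antilinear operator $Z$ and thereby reduce the norm of $e^Z$ to a product of one-variable Gaussian norms. Since the homogeneous summands of $S[V]$ are mutually perpendicular and $\zeta^d \in S^{2d}V$, we have at once
	\[\Vert e^Z \Vert^2 = \sum_{d \geqslant 0} \frac{\Vert \zeta^d \Vert^2}{(d!)^2},
	\]
so everything comes down to computing $\Vert \zeta^d \Vert^2$ and summing. First I would record the relevant spectral facts: the composite $Z^2 : V \to V$ is linear, self-adjoint and positive, since $\langle Z^2 x \vert x \rangle = \overline{\langle Z x \vert Z x \rangle} = \Vert Z x \Vert^2$, and its operator norm is $\Vert Z \Vert^2$. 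Because $Z$ maps each eigenspace of $Z^2$ to itself and restricts on each positive-eigenvalue eigenspace to an antilinear involution (after rescaling) whose real fixed-point form inherits a genuine real inner product from the symmetry of $Z$, one obtains an orthonormal basis $(v_1, \dots, v_m)$ of $V$ together with reals $z_1, \dots, z_m \geqslant 0$ with $Z v_i = z_i v_i$ for each $i$. Then $Z^2 v_i = z_i^2 v_i$, so $I - Z^2$ has eigenvalues $1 - z_i^2$ and $\Vert Z \Vert = \max_i z_i$.

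Next I would identify $\zeta$ explicitly in this basis. Feeding $x = v_i$, $y = v_j$ into the defining relation $\langle x y \vert \zeta \rangle = \langle y \vert Z x \rangle$ gives $\langle v_i v_j \vert \zeta \rangle = z_i \delta_{ij}$, and a short computation with the inner product on $S^2 V$ (using $\langle v_i^2 \vert v_j^2 \rangle = 2\delta_{ij}$ and $\langle v_i v_j \vert v_k^2 \rangle = 0$ for $i \neq j$) forces
	\[\zeta = \tfrac12 \sum_{i=1}^m z_i \, v_i^2 .
	\]
I would then pass to the unitary identification $SV \cong \bigotimes_{i=1}^m S(\mathbb{C} v_i)$ furnished by the orthonormal basis; under it $e^Z$ factors as $\bigotimes_{i=1}^m \exp(\tfrac12 z_i v_i^2)$ with $\exp(\tfrac12 z_i v_i^2) = \sum_{k \geqslant 0} \frac{z_i^k}{2^k k!} v_i^{2k}$. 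Since $\Vert v_i^{2k} \Vert^2 = (2k)!$ and distinct powers of $v_i$ are perpendicular, the norm is multiplicative across the factors:
	\[\Vert e^Z \Vert^2 = \prod_{i=1}^m \ \sum_{k \geqslant 0} \frac{(2k)!}{4^k (k!)^2} \, z_i^{2k} .
	\]
Recognizing the inner series as the binomial expansion $\sum_{k \geqslant 0} \binom{2k}{k}(z_i^2/4)^k = (1 - z_i^2)^{-1/2}$, which converges exactly when $z_i^2 < 1$ and diverges otherwise, yields convergence of $\Vert e^Z \Vert^2$ precisely when $\max_i z_i = \Vert Z \Vert < 1$, with value $\prod_{i=1}^m (1 - z_i^2)^{-1/2} = {\rm Det}^{1/2}(I - Z^2)^{-1}$.

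I expect the diagonalization step to be the main obstacle: one must justify, for the \emph{antilinear} symmetric $Z$, the existence of an orthonormal eigenbasis with nonnegative real eigenvalues (the Autonne--Takagi phenomenon), and in particular check that the real fixed-point subspaces on the positive eigenspaces can be taken orthonormal in $V$. If one wishes to sidestep this, one can instead diagonalize only the genuinely linear positive operator $Z^2$ and push the resulting multinomial expansion of $\zeta^d$ through directly, using $\Vert v_1^{a_1} \cdots v_m^{a_m} \Vert^2 = a_1! \cdots a_m!$ and the orthogonality of distinct monomials; this is more laborious but invokes nothing beyond the spectral theorem. Everything downstream of the diagonalization — the tensor factorization, the evaluation $\Vert v_i^{2k} \Vert^2 = (2k)!$, and the summation of the series — is routine.
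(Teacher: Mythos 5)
Your proposal is correct and follows essentially the same route as the paper: Takagi-diagonalize the symmetric antilinear $Z$ to obtain $\zeta = \tfrac12\sum_i z_i v_i^2$, factor $\Vert e^Z\Vert^2$ over the eigendirections, and sum the binomial series $\sum_{k}\binom{2k}{k}(z_i^2/4)^k = (1-z_i^2)^{-1/2}$, with divergence of some factor exactly when $\Vert Z\Vert \geqslant 1$. The paper arrives at the same product of one-variable series via a multinomial expansion of $\zeta^d$ rather than your tensor factorization of $SV$, but this is only a difference of bookkeeping.
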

\begin{proof}
By diagonalization, $Z$ furnishes a unitary basis $v_1, \dots, v_m$ for $V$ and nonnegative scalars $\lambda_1, \dots , \lambda_m$ such that if $1 \leqslant k \leqslant m$ then $Z v_k = \lambda_k v_k$ and therefore $\zeta = \frac{1}{2}\sum_{k=1}^m \lambda_k v_{k}^2$. Now, if $d \in \mathbb{N}$ then 
	\[\zeta^d = \sum_{D} \binom{d}{d_1 \cdots d_m} \Bigl(\frac{\lambda_1}{2}\Bigr)^{d_1} \cdots \Bigl(\frac{\lambda_m}{2}\Bigr)^{d_m} \: v_{1}^{2 d_1} \cdots v_{m}^{2 d_m} 
\]
whence 
\[\frac{\Vert \zeta^d \Vert^2}{(d!)^2} = \sum_{D} \binom{2 d_1}{d_1} \cdots \binom{2 d_m}{d_m} \Bigl(\frac{\lambda_1}{2}\Bigr)^{2 d_1} \cdots \Bigl(\frac{\lambda_m}{2}\Bigr)^{2 d_m}
\]
where summation takes place over all multiindices $D = (d_1 , \dots , d_m) \in {\mathbb N}^m$ for which $d_1 + \cdots + d_m = d$. It follows that 
\begin{eqnarray*}
	 \sum_{d \geqslant 0} \frac{\Vert \zeta^n \Vert^2}{(d!)^2} & = & \sum_{d_1 \geqslant 0} \binom{2 d_1}{d_1} \Bigl(\frac{\lambda_1}{2}\Bigr)^{2 d_1} \cdots \sum_{d_m \geqslant 0} \binom{2 d_m}{d_m} \Bigl(\frac{\lambda_m}{2}\Bigr)^{2 d_m} \\ & = & (1 - \lambda_{1}^2)^{-1/2} \cdots (1 - \lambda_{m}^2)^{-1/2} \\ & = & {\rm Det}^{1/2} (I - Z^2)^{-1} 
\end{eqnarray*}
provided that each of the nonnegative numbers $\lambda_1 , \dots ,\lambda_m$ is strictly less than unity. In the opposite direction, if (say) $\lambda = \lambda_k \geqslant 1$ then  
	\[\sum_{d \geqslant 0} \binom{2 d}{d} \Bigl(\frac{\lambda}{2}\Bigr)^{2 d} = \infty
\]
and this divergence already prohibits $e^Z$ from membership in $S[V]$. 
\end{proof}

\medbreak

For convenience, let us denote by $\overline{\mathcal{D}} (V)$ the set of all symmetric antilinear maps $Z : V \rightarrow V$ such that $\Vert Z \Vert \leqslant 1$ and let $\mathcal{D} (V)$ comprise those $Z$ that satisfy $\Vert Z \Vert < 1$; these are versions of closed and open Siegel domains. Let us also write $\mathcal{G} (V)$ for the set comprising all those (necessarily invertible) complex-linear maps $T : V \rightarrow V$ such that if $0 \neq v \in V$ then ${\rm Re} \langle v \vert T v \rangle > 0$: it is readily verified that $\mathcal{G} (V)$ is a convex open neighbourhood of the identity in the complex general linear group on $V$; consequently, $\rm{Det}$ has a holomorphic square-root ${\rm Det}^{1/2}: \mathcal{G} (V) \rightarrow \mathbb{C}$. Now, if $X$ and $Y$ lie in the closed Siegel domain $\overline{\mathcal{D}} (V)$ and $I - Y X$ is invertible then in fact $I - Y X \in \mathcal{G} (V)$: indeed, if $v \in V$ then 
	\[2 {\rm Re} \langle v \vert (I - Y X) v \rangle = (\Vert v \Vert^2 - \Vert Xv \Vert^2) + (\Vert v \Vert^2 - \Vert Yv \Vert^2) + \Vert X v - Y v \Vert^2;
\]
this vanishes only when $X v = Y v$ and $(I - Y^2) v = 0$ which forces $(I - Y X) v = 0$ and therefore $v = 0$ by the supposed invertibility of $I - Y X$. Of course, the invertibility of $I - Y X$ is certainly guaranteed if either $X$ or $Y$ actually lies in the open Siegel domain $\mathcal{D} (V)$. After these preparatory comments, we may evaluate the inner product between a pair of Gaussians in Fock space. 

\medbreak 

\begin{theorem} \label{open}
If $X, Y \in \mathcal{D} (V)$ then 
	\[\langle e^X \vert e^Y \rangle = {\rm Det}^{1/2} (I - Y X)^{-1}.
\]
\end{theorem}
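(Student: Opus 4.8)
The plan is to reduce to the diagonal computation already carried out in Theorem~\ref{norm}, handling the off-diagonal case by analytic continuation from the diagonal one. First I would record the computation that underpins everything: if $X$ and $Y$ are \emph{simultaneously} diagonalizable, say $X v_k = \xi_k v_k$ and $Y v_k = \eta_k v_k$ in a common unitary basis $v_1,\dots,v_m$, then the corresponding quadratics are $\xi = \tfrac12\sum_k \xi_k v_k^2$ and $\eta = \tfrac12\sum_k \eta_k v_k^2$, and a direct expansion of $\langle \xi^d / d! \mid \eta^d / d! \rangle$ exactly as in Theorem~\ref{norm} (with $\binom{2d_k}{d_k}(\xi_k/2)^{d_k}(\overline{\eta_k}/2)^{d_k}$ replacing the square terms, using antilinearity in the second slot) gives, upon summing over $d$,
\[
\langle e^X \mid e^Y \rangle = \prod_{k=1}^m (1 - \xi_k \overline{\eta_k})^{-1/2} = {\rm Det}^{1/2}(I - Y^{*} X)^{-1}.
\]
Here one must be slightly careful that the relevant matrix is $I - Y^* X$ (with $Y^*$ the adjoint) rather than $I - YX$; but since the theorem as stated writes $I - YX$, I would either note that the map $Z \mapsto e^Z$ is conjugate-linear in the parameter in a way that absorbs this, or — more cleanly — observe that in the convention of the paper the pairing $\langle e^X \mid e^Y\rangle$ is antilinear in $e^Y$ hence the natural object is indeed ${\rm Det}^{1/2}(I - \overline{Y}X)^{-1}$ where $\overline{Y}$ denotes the complex conjugate in the chosen basis; with the symmetry of $Y$ this is the adjoint, and I would simply match notation to whatever makes the stated formula literally correct, flagging that $YX$ here means the composite of the two antilinear maps, which is complex-linear.

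The second step is to remove the simultaneous-diagonalizability assumption. The honest content is this: both sides of the claimed identity are holomorphic functions of the pair $(X,Y)$ on the open Siegel domain $\mathcal{D}(V) \times \mathcal{D}(V)$. For the right-hand side this is immediate from the preparatory remarks in the excerpt: when $X, Y \in \mathcal{D}(V)$ the operator $I - YX$ is invertible and lies in $\mathcal{G}(V)$, on which ${\rm Det}^{1/2}$ is holomorphic, and $(X,Y) \mapsto I - YX$ is (complex-)polynomial once one is careful about the antilinearity — concretely, fixing a unitary basis, $Z \in \overline{\mathcal{D}}(V)$ is encoded by its symmetric matrix and the entries of $e^Z$ depend holomorphically on those matrix entries. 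For the left-hand side, $\langle e^X \mid e^Y \rangle = \sum_d \langle \xi^d/d! \mid \eta^d/d!\rangle$ is a sum of terms each polynomial in the matrix entries of $X$ and antipolynomial in those of $Y$, and on any compact subset of $\mathcal{D}(V)\times\mathcal{D}(V)$ one has a uniform bound $\|Z\|\leqslant 1-\varepsilon$ giving, via Theorem~\ref{norm} and Cauchy--Schwarz, geometric-type domination of the series; hence it converges locally uniformly and defines a (real-)analytic, indeed separately holomorphic/antiholomorphic, function. Then I would invoke the standard fact that two such functions agreeing on the totally real / Zariski-dense set of pairs $(X,Y)$ that are simultaneously diagonalizable must agree everywhere on $\mathcal{D}(V)\times\mathcal{D}(V)$ — or, even more economically, first fix $Y$ and analytically continue in $X$ alone from the locus where $X$ commutes with $Y$, then continue in $Y$.

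The main obstacle I anticipate is bookkeeping rather than depth: keeping the conjugations straight so that the final determinant reads $I - YX$ and not $I - Y^*X$ or $I - XY$, and making precise the sense in which the entries of $e^Z$ are holomorphic in $Z$ (since $Z$ itself is antilinear, "holomorphic in $Z$" must mean holomorphic in the matrix entries with respect to a fixed unitary basis, equivalently in the quadratic $\zeta \in S^2 V$). A secondary technical point is justifying the interchange of summation and limit in the analytic-continuation argument, which the uniform estimate from Theorem~\ref{norm} supplies. Once those are pinned down, the proof is: compute the simultaneously-diagonal case by the same generating-function identity as Theorem~\ref{norm}; check both sides are holomorphic on the Siegel bidomain; conclude by the identity theorem.
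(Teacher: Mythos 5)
Your proposal is essentially the paper's own proof: the paper likewise establishes the identity on a real locus (just the diagonal $X=Y$, which is Theorem \ref{norm}) and then extends to all of $\mathcal{D}(V)\times\mathcal{D}(V)$ by analytic continuation, using that both sides are antiholomorphic in $X$ and holomorphic in $Y$; your extra computation on simultaneously diagonalizable pairs is correct but unnecessary, since the diagonal already suffices as a uniqueness set for such sesquiholomorphic functions, and your verification of local uniform convergence of the left side is a detail the paper omits. Two cautions: the conjugation worry evaporates because $X$ and $Y$ are antilinear, so $YXv_k=\overline{\xi_k}\,\eta_k v_k$ and the stated ${\rm Det}^{1/2}(I-YX)^{-1}$ is literally correct under the paper's convention (inner product antilinear in the first slot, so no $Y^*$ appears); and your ``more economical'' variant --- fixing $Y$ and continuing in $X$ from the locus of $X$ diagonal in $Y$'s eigenbasis --- fails for $m>1$, because that locus corresponds to a proper complex subspace of $S^2V$ (the span of $v_1^2,\dots,v_m^2$), on which an antiholomorphic function can vanish without vanishing identically, so the continuation must proceed from the totally real diagonal (or a set containing it).
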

\begin{proof} 
Both sides of the claimed formula are antiholomorphic in $X$ and holomorphic in $Y$. Equality on the diagonal of $\mathcal{D} (V) \times \mathcal{D} (V)$ is established in Theorem \ref{norm}; equality on the whole of $\mathcal{D} (V) \times \mathcal{D} (V)$ follows by the principle of analytic continuation.
\end{proof}

\medbreak

For a variety of purposes (notably for half-form pairings within geometric quantization: see \cite{Kostant} and \cite{RR} for details) this inner product formula must be extended beyond the open Siegel domain. Let $X, Y \in \overline{\mathcal{D}} (V)$ be such that $I - Y X$ is invertible; in this case, recall that $I - Y X \in \mathcal{G} (V)$ and therefore that ${\rm Det}^{1/2} (I - Y X)^{-1}$ is defined. Thus, the right side of the inner product formula in Theorem \ref{open} makes perfectly good sense; by contrast, the left side makes sense as an inner product only when $X, Y \in \mathcal{D} (V)$ for only then do the associated Gaussians lie in Fock space. It is tempting to replace the left side by the bosonic pairing $\langle e^X : e^Y \rangle_1$ and indeed this works when $V$ is one-dimensional. 

\medbreak 

\begin{theorem}
Let $V$ be one-dimensional. If $X, Y \in \overline{\mathcal{D}} (V)$ are such that $I - Y X$ is invertible then 
	\[\langle e^X : e^Y \rangle_1 = {\rm Det}^{1/2} (I - Y X)^{-1}.
\]
\end{theorem}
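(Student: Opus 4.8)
The plan is to collapse everything to a single scalar power-series identity. Fix a unit vector $v \in V$; then the quadratics attached to $X$ and $Y$ are $\zeta_X = \tfrac12 \xi\, v^2$ and $\zeta_Y = \tfrac12 \eta\, v^2$ for uniquely determined $\xi, \eta \in \mathbb{C}$, and the hypotheses $X, Y \in \overline{\mathcal{D}}(V)$ become simply $|\xi| \leqslant 1$ and $|\eta| \leqslant 1$. A one-line computation (using that the inner product is linear in its second slot) gives $X v = \xi v$ and $Y v = \eta v$, whence the complex-linear composite $Y X$ is multiplication by $\bar\xi \eta$. So, writing $w = \bar\xi \eta$, the operator $I - Y X$ is invertible precisely when $w \neq 1$, in which case ${\rm Det}(I - Y X) = 1 - w$; by the discussion immediately preceding the theorem $I - Y X = (1 - w) I$ already lies in $\mathcal{G}(V)$, so ${\rm Det}^{1/2}(I - Y X)^{-1}$ is the principal value $(1 - w)^{-1/2}$.

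On the other side, $e^X$ and $e^Y$ have nonzero components only in even degrees, with $(e^X)_{2d} = \tfrac{1}{d!}(\tfrac{\xi}{2})^d v^{2d}$ and similarly for $Y$. Using $\langle v^{2d} \vert v^{2d} \rangle = (2d)!$ one computes
	\[\langle (e^X)_{2d} \vert (e^Y)_{2d} \rangle = \binom{2d}{d}\Bigl(\frac{w}{4}\Bigr)^d,\]
so that by definition $\langle e^X : e^Y \rangle_1$ is the sum of the series $\sum_{d \geqslant 0} \binom{2d}{d}(w/4)^d$ whenever that series converges. The theorem thus reduces to the assertion that, for every $w$ with $|w| \leqslant 1$ and $w \neq 1$, this series converges and sums to $(1 - w)^{-1/2}$.

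For $|w| < 1$ this is the classical Taylor expansion of $(1 - w)^{-1/2}$, whose radius of convergence is $1$; I would point out that one cannot shortcut through Theorem \ref{open}, since $|\xi|$ or $|\eta|$ may equal $1$ even when $|\xi\eta| < 1$, so neither $e^X$ nor $e^Y$ need lie in Fock space. The substance is the boundary circle $|w| = 1$, $w \neq 1$. There the coefficients $c_d := \binom{2d}{d}4^{-d}$ are positive and, from the ratio $c_{d+1}/c_d = (2d+1)/(2d+2) < 1$, strictly decrease to $0$, while the partial sums of $\sum_d w^d$ are bounded because $w \neq 1$; Dirichlet's test then gives convergence of $\sum_d c_d w^d$. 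Abel's limit theorem identifies the sum with $\lim_{t \to 1^-} \sum_d c_d (t w)^d = \lim_{t \to 1^-}(1 - t w)^{-1/2} = (1 - w)^{-1/2}$, the last equality because $1 - t w$ stays in the open right half-plane, where the principal square root is continuous. Matching the two sides completes the proof.

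The step I expect to be the main obstacle is precisely this boundary analysis: on $|w| = 1$ the defining series is only conditionally convergent, so its convergence has to be teased out by Dirichlet's test and its value pinned down by Abel's theorem rather than by any soft principle — and indeed neither Theorem 2 nor Theorem 3 reaches this regime, as the estimate $\Vert (e^X)_{2d} \Vert \sim |\xi|^d (\pi d)^{-1/4}$ makes clear.
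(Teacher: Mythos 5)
Your proof is correct and follows essentially the same route as the paper: reduce to the scalar identity $\sum_{d\geqslant 0}\binom{2d}{d}(w/4)^d=(1-w)^{-1/2}$ with $w=\bar\xi\eta$ in the closed unit disc, $w\neq 1$ (the paper writes $X=a\sigma$, $Y=b\sigma$ via a conjugation $\sigma$, which is the same normalization in different clothing). The only difference is that the paper disposes of the boundary circle by citing the general binomial theorem (Knopp, Item 247), whereas you prove that case from scratch with Dirichlet's test and Abel's limit theorem — a self-contained substitute for the citation, not a different argument.
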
 
\begin{proof} 
For convenience, choose and fix a conjugation $\sigma$ on $V$: thus, $\sigma^2 = I$ and if $x, y \in V$ then $\langle \sigma x \vert \sigma y \rangle = \langle y \vert x \rangle$; further, let $u \in V$ be one of the two unit vectors fixed by the conjugation. There exist complex scalars $a$ and $b$ in the closed unit disc such that $X = a \sigma$ and $Y = b \sigma$ whence $I - Y X = (1 - \overline{a}b)I$ is invertible iff $ \overline{a}b \neq 1$; the quadratics in $S^2 V$ that correspond to $X$ and $Y$ are $\xi = \frac{1}{2} a u^2$ and $\eta = \frac{1}{2} b u^2$ respectively. Now, if $d \in \mathbb{N}$ then $\langle \xi^d \vert \eta^d \rangle = (2 d)! \: (\overline{a}b /4)^d$ whence by summation 
	\[\langle e^X : e^Y \rangle_1 = \sum_{d \geqslant 0} \binom{2 d}{d} (\overline{a}b / 4 )^d = (1 - \overline{a}b )^{-1/2} = {\rm Det}^{1/2} (I - Y X)^{-1}.
\]
Evaluation of the sum in this argument is provided by the general binomial theorem (see \cite{Knopp} Item 247).
\end{proof}

\medbreak

Unfortunately, this na\"ive approach fails when $V$ has complex dimension $m > 1$. To see this by example, let $Z: V \rightarrow V$ be a conjugation and let $\zeta \in S^2 V$ be the corresponding quadratic: in this case, it may be checked that if $d \geqslant 0$ then 
	\[\frac{\Vert \zeta^{d + 1} \Vert^2}{(d + 1)!^2} \Big/\frac{\Vert \zeta^d \Vert^2}{d!^2} = \frac{d + \frac{1}{2}m}{d + 1}
\]
so the formal series 
	\[\langle e^Z : e^{-Z} \rangle_1 = \sum_{d \geqslant 0} (-1)^d \Vert \zeta^d \Vert^2 / d!^2 \geqslant 1
\]
cannot converge, although of course $I + Z^2$ is invertible and ${\rm Det}^{1/2} (I + Z^2)^{-1}$ is defined since $Z^2$ is a positive operator. 

\medbreak

Accordingly, we extend our definition of the bosonic pairing by means of a regularization. As preparation, fix $0 < t < 1$ and define $\langle \; : \; \rangle_t$ by scaling $S^d V$ by $t^d$ for each $d \geqslant 0$: explicitly, define 
	\[\Phi, \Psi \in SV' \Rightarrow \langle \Phi : \Psi \rangle_t = \sum_{d \geqslant 0} \langle \Phi_d \vert \Psi_d \rangle t^{2 d}
\]
whenever the indicated series is convergent. 

\medbreak

\textbf{Definition}: The (extended) bosonic pairing is defined by the rule 
	\[\Phi, \Psi \in SV' \Rightarrow  \langle \Phi : \Psi \rangle = \lim_{t \uparrow 1} \langle \Phi : \Psi \rangle_t
\]
whenever the indicated limit exists. 

\medbreak

We hasten to point out at once that we have indeed fashioned an extension of the original bosonic pairing. 

\medbreak 

\begin{theorem}
Let $\Phi, \Psi \in SV'$. If $\langle \Phi : \Psi \rangle_1$ is defined then $\langle \Phi : \Psi \rangle$ is defined and has the same value.

\end{theorem}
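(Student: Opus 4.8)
The plan is to recognize this statement as an instance of Abel's classical limit theorem for power series. Write $a_d = \langle \Phi_d \vert \Psi_d \rangle$ and substitute $x = t^2$, so that $\langle \Phi : \Psi \rangle_t = \sum_{d \geqslant 0} a_d x^d$; since $t \mapsto t^2$ is a continuous increasing bijection of $(0,1)$ onto itself, the claim reduces to: if $\sum_{d \geqslant 0} a_d$ converges to $s$, then $\sum_{d \geqslant 0} a_d x^d$ converges for $0 \leqslant x < 1$ and tends to $s$ as $x \uparrow 1$.

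First I would observe that $\langle \Phi : \Psi \rangle_t$ is indeed defined for each $0 < t < 1$: convergence of $\sum a_d$ forces $a_d \to 0$, hence $\vert a_d \vert \leqslant M$ for some $M$, whence $\sum_{d \geqslant 0} \vert a_d \vert x^d \leqslant M \sum_{d \geqslant 0} x^d < \infty$ throughout $0 \leqslant x < 1$. Next I would carry out the Abel summation. Putting $s_n = \sum_{d = 0}^{n} a_d$ (with $s_{-1} = 0$) and $s = \lim_n s_n$, summation by parts gives, for fixed $x \in (0,1)$,
\[ \sum_{d = 0}^{N} a_d x^d = s_N x^N + (1 - x) \sum_{d = 0}^{N - 1} s_d x^d ; \]
letting $N \to \infty$ and using $s_N x^N \to 0$ (the sequence $(s_N)$ being bounded) yields $\sum_{d \geqslant 0} a_d x^d = (1 - x) \sum_{d \geqslant 0} s_d x^d$, the latter series converging since $(s_d)$ is bounded. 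As $s = (1 - x) \sum_{d \geqslant 0} s\, x^d$ as well, subtraction gives $\sum_{d \geqslant 0} a_d x^d - s = (1 - x) \sum_{d \geqslant 0} (s_d - s) x^d$.

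Finally I would estimate the right-hand side in the standard way: given $\varepsilon > 0$, choose $D$ with $\vert s_d - s \vert < \varepsilon$ for $d \geqslant D$; the tail contributes at most $(1 - x) \sum_{d \geqslant D} \varepsilon x^d \leqslant \varepsilon$, while the finite head $(1 - x) \sum_{d = 0}^{D - 1} (s_d - s) x^d$ has absolute value at most $(1 - x) \sum_{d = 0}^{D - 1} \vert s_d - s \vert$, which tends to $0$ as $x \uparrow 1$. Hence $\limsup_{x \uparrow 1} \vert \sum_{d \geqslant 0} a_d x^d - s \vert \leqslant \varepsilon$ for every $\varepsilon > 0$, so $\lim_{x \uparrow 1} \sum_{d \geqslant 0} a_d x^d = s = \langle \Phi : \Psi \rangle_1$, which is what is wanted. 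There is no real obstacle here beyond recalling Abel's theorem; the only point meriting a word of care is the preliminary remark that $\langle \Phi : \Psi \rangle_t$ is well defined for $0 < t < 1$, and one could equally well simply cite Abel's theorem (for instance \cite{Knopp}) in place of reproducing the summation-by-parts argument.
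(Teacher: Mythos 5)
Your proposal is correct and takes essentially the same route as the paper, which simply cites the Abel limit theorem for power series (Knopp, Item 232) after the implicit substitution $x = t^2$. You merely reproduce the standard summation-by-parts proof of Abel's theorem, which is accurate but, as you note yourself, could be replaced by the citation.
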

\begin{proof} 
An immediate consequence of the Abel limit theorem for complex power series (see \cite{Knopp} Item 232).
\end{proof}

\medbreak 

Thus, all our previous evaluations of bosonic pairings carry over into this extended context. 

\medbreak

  Standard Tauberian theorems guarantee that where $\langle \: : \: \rangle$ is defined, $\langle \: : \: \rangle_1$ is also defined under certain conditions. The very simplest of these theorems (due to Pringsheim: see Example B27 on page 251 of \cite{Bro} and the Theorem following Item 101 in \cite{Knopp}) implies that if $\Phi \in SV'$ and $\langle \Phi : \Phi \rangle$ is defined then $\langle \Phi : \Phi \rangle_1$ is defined. However, this extended bosonic pairing is a strict extension of the original, as we proceed to see explicitly. 

\medbreak 

In fact, we are now able to pair Gaussians as we would wish. 

\medbreak 

\begin{theorem} 
If $X, Y \in \overline{\mathcal{D}} (V)$ are such that $I - Y X$ is invertible then 
	\[\langle e^X : e^Y \rangle = {\rm Det}^{1/2} (I - Y X)^{-1}.
\]
\end{theorem}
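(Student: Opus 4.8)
The plan is to reduce the $t$-regularized pairing to an honest Fock-space inner product of rescaled Gaussians, apply Theorem~\ref{open} there, and then pass to the limit $t\uparrow 1$ using the continuity of the holomorphic square-root of the determinant on $\mathcal{G}(V)$.

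First I would record the effect of the scaling regularization on a Gaussian. If the symmetric antilinear $X$ corresponds to $\xi \in S^2 V$ and $Y$ corresponds to $\eta \in S^2 V$, then for real $s$ the map $sX$ corresponds to $s\xi$, so $(e^X)_{2d} = \xi^d/d!$ and a direct computation gives, for $0 < t < 1$,
\[
\langle e^X : e^Y \rangle_t = \sum_{d \geqslant 0} \frac{t^{4d}}{(d!)^2} \langle \xi^d \vert \eta^d \rangle = \langle e^{t^2 X} \vert e^{t^2 Y} \rangle .
\]
Since $\Vert t^2 X \Vert \leqslant t^2 < 1$ and likewise $\Vert t^2 Y \Vert < 1$, both $t^2 X$ and $t^2 Y$ lie in the open Siegel domain $\mathcal{D}(V)$, so $e^{t^2 X}$ and $e^{t^2 Y}$ genuinely lie in Fock space, the series converges, and Theorem~\ref{open} yields
\[
\langle e^X : e^Y \rangle_t = {\rm Det}^{1/2}(I - (t^2 Y)(t^2 X))^{-1} = {\rm Det}^{1/2}(I - t^4 \, Y X)^{-1} \qquad (0 < t < 1).
\]

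It then remains to show that $\lim_{t \uparrow 1} {\rm Det}^{1/2}(I - t^4 \, Y X)^{-1} = {\rm Det}^{1/2}(I - Y X)^{-1}$. For this I would look at the segment $s \mapsto I - s Y X$, $s \in [0,1]$. Its endpoints $I$ (trivially) and $I - Y X$ (by the preparatory remarks, using that $I - Y X$ is invertible) both lie in $\mathcal{G}(V)$, and $\mathcal{G}(V)$ is convex, so the whole segment lies in $\mathcal{G}(V)$. Since ${\rm Det}^{1/2}$ is holomorphic and nowhere zero on $\mathcal{G}(V)$, the map $s \mapsto {\rm Det}^{1/2}(I - s Y X)^{-1}$ is continuous on $[0,1]$; evaluating along $s = t^4 \uparrow 1$ as $t \uparrow 1$ gives the limit ${\rm Det}^{1/2}(I - Y X)^{-1}$, which by definition of the extended bosonic pairing is precisely $\langle e^X : e^Y \rangle$.

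The one genuinely substantive point is the first display: recognizing that the $t$-regularized pairing of $e^X$ and $e^Y$ is exactly the Fock inner product of the Gaussians attached to $t^2 X$ and $t^2 Y$. Once this is in hand everything else is the continuity of an already-available holomorphic function along a segment that convexity keeps inside $\mathcal{G}(V)$. Alternatively, one can bypass the appeal to convexity by writing $s Y X = Y(sX)$ with $sX \in \mathcal{D}(V)$ for $s < 1$, so that $I - Y(sX)$ is invertible and the preparatory computation on $2\,{\rm Re}\,\langle v \vert (I - Y(sX)) v \rangle$ shows directly that $I - Y(sX) \in \mathcal{G}(V)$ for every $s \in [0,1]$.
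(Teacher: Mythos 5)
Your proof is correct and takes essentially the same route as the paper's: recognize $\langle e^X : e^Y \rangle_t$ as the honest Fock inner product of the Gaussians attached to rescaled elements of the open Siegel domain, apply Theorem~\ref{open} there, and let $t \uparrow 1$ using the continuity of ${\rm Det}^{1/2}$ on $\mathcal{G}(V)$. The only (immaterial) difference is that reading the definition of $\langle\,:\,\rangle_t$ literally on the degree-$2d$ components gives the weight $t^{4d}$ and hence $I - t^4\,YX$ as you have it, whereas the paper's computation produces $I - t^2\,YX$; either way the limit is ${\rm Det}^{1/2}(I - YX)^{-1}$.
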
 
\begin{proof} 
Let $X, Y$ correspond to the quadratics $\xi, \eta$ in the usual manner. Fix $0 < t < 1$ and notice that $t X, t Y \in \mathcal{D} (V)$: if $d \geqslant 0$ then of course $\langle \xi^d \vert \eta^d \rangle t^{2 d} = \langle (t \xi)^d \vert (t \eta)^d \rangle$ whence summation yields 
	\[\langle e^X : e^Y \rangle_t = {\rm Det}^{1/2} (I - t^2 Y X)^{-1}
\]
according to Theorem \ref{open}. The continuity of ${\rm Det}^{1/2}$ on $\mathcal{G} (V)$ as noted prior to Theorem \ref{open} permits us to let $t \uparrow 1$ and complete the proof. 
\end{proof}

\medbreak

We close with some  remarks on the process of regularization by which we extended the bosonic pairing. On the one hand, it is already well-established as a technique for taming divergent series, under the name A-summability (after Abel: see \cite{Knopp} Sections 59 and 61). On the other hand, it naturally incorporates the grading of the symmetric algebra; indeed, the (extended) bosonic pairing is actually invariant under the corresponding unitary group. Explicitly, let $U_{\mathbb{N}} (SV)$ denote the group comprising all those unitary automorphisms $U$ of $SV$ that respect the grading in that if $d \geqslant 0$ then $U (S^d V) = S^d V$. Naturally, each $U \in U_{\mathbb{N}} (SV)$ acts antidually on $SV'$: thus, if $\Phi \in SV'$ and $\psi \in SV$ then $[U \Phi] (\psi) = \Phi [U^* \psi]$; further, if also $d \geqslant 0$ then $(U \Phi)_d = U (\Phi_d)$.  

\medbreak 

\begin{theorem} 
Let $U \in U_{\mathbb{N}} (SV)$ and let $\Phi, \Psi \in SV'$. If $\langle \Phi : \Psi \rangle$ is defined then so is $\langle U \Phi : U \Psi \rangle$ and 
	\[\langle U \Phi : U \Psi \rangle = \langle \Phi : \Psi \rangle.
\]
\end{theorem}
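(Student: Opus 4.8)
The plan is to push everything down to the regularized pairings $\langle \, : \, \rangle_t$ with $0 < t < 1$, show that each of these is literally unchanged under the action of $U$, and then let $t \uparrow 1$. First I would recall that, because $V$ is finite-dimensional, every $S^d V$ is finite-dimensional, so the canonical inclusion $S^d V \hookrightarrow S^d V'$ is an isomorphism; hence each $\Phi \in SV'$ is a formal series $\sum_{d \geqslant 0} \Phi_d$ with $\Phi_d \in S^d V$, and the scalars $\langle \Phi_d \vert \Psi_d \rangle$ appearing in the definition of the pairing are genuine Fock inner products of honest vectors. Next I would record that for $U \in U_{\mathbb{N}} (SV)$ the antidual action is compatible with this picture: as already noted before the statement, $(U \Phi)_d = U (\Phi_d)$, where on the right $U$ is the degree-$d$ unitary obtained by restricting $U$ to the finite-dimensional space $S^d V$.

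The crux is then the identity, valid for each fixed $0 < t < 1$,
\[
\langle U \Phi : U \Psi \rangle_t = \sum_{d \geqslant 0} \langle (U \Phi)_d \vert (U \Psi)_d \rangle \, t^{2 d} = \sum_{d \geqslant 0} \langle U (\Phi_d) \vert U (\Psi_d) \rangle \, t^{2 d} = \sum_{d \geqslant 0} \langle \Phi_d \vert \Psi_d \rangle \, t^{2 d} = \langle \Phi : \Psi \rangle_t ,
\]
in which the third equality is merely the unitarity of $U$ on each homogeneous summand $S^d V$. In particular the two series agree term by term, so the one on the left converges precisely when the one on the right does, and then to the same sum. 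No Tauberian or analytic input is needed at this stage; it is a term-by-term comparison.

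Finally I would invoke the definition of the extended pairing: if $\langle \Phi : \Psi \rangle$ is defined, then $\langle \Phi : \Psi \rangle_t$ is defined for all $t$ sufficiently close to $1$ and $\lim_{t \uparrow 1} \langle \Phi : \Psi \rangle_t$ exists. By the displayed identity the same is true with $\Phi, \Psi$ replaced by $U \Phi, U \Psi$, and the two limits coincide; this is exactly the assertion $\langle U \Phi : U \Psi \rangle = \langle \Phi : \Psi \rangle$.

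I do not expect any genuine obstacle here. The only point that needs care is the bookkeeping verifying that the antidual $U$-action commutes with extraction of homogeneous components and restricts to a unitary operator on each $S^d V$ — and this has essentially been arranged in the paragraph preceding the statement, where $(U \Phi)_d = U (\Phi_d)$ is recorded. Everything else is the trivial observation that a term-by-term equality of power series in $t$ is preserved under taking limits as $t \uparrow 1$.
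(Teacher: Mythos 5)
Your proposal is correct and is essentially the paper's own argument: the paper likewise observes that $\langle U \Phi : U \Psi \rangle_t = \langle \Phi : \Psi \rangle_t$ for each $0 < t < 1$ and then lets $t \uparrow 1$. You have simply made explicit the term-by-term justification (via $(U\Phi)_d = U(\Phi_d)$ and unitarity on each $S^d V$) that the paper dismisses with the word ``plainly.''
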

\begin{proof} 
In fact, if $0 < t < 1$ then plainly $ \langle U \Phi : U \Psi \rangle_t = \langle \Phi : \Psi \rangle_t$ and taking the limit as $t \uparrow 1$ concludes the argument. 
\end{proof}

\medbreak 

In particular, the functorial extension of each element of the unitary group $U(V)$ to a unitary automorphism of $SV \subset SV'$ preserves the bosonic pairing.   

\medbreak 

In contrast, the bosonic pairing is not invariant under the `full' unitary group $U(SV)$ even when $V$ is one-dimensional. To see this, choose a unit vector $v \in V$ and for $d \geqslant 0$ let $v_d = v^d / \sqrt{d!}$ so that $SV$ has $(v_d : d \geqslant 0)$ as unitary basis: in this way, we identify $SV$ with the space $\mathfrak s$ comprising all finitely-nonzero complex sequences and $SV'$ with the space $\mathfrak s'$ comprising all complex sequences; furthermore, if the complex sequences $\lambda = (\lambda_d : d \geqslant 0)$ and $\mu = (\mu_d : d \geqslant 0)$ lie in $\mathfrak s'$ and are bounded then $\langle \lambda : \mu \rangle_t = \sum_{d \geqslant 0} (\overline{\lambda_d}\: \mu_d) t^{2 d}$ when $0 < t < 1$. Now, for example, let $U \in U(SV)$ be defined by fixing $v_0$ and interchanging $v_{2 n -1}$ with $v_{2 n}$ when $n \geqslant 1$; further, let $\lambda_d = 1$ and $\mu_d = (-1)^d$ for $d \geqslant 0$. If $0 < t < 1$ then 
	\[\langle \lambda : \mu \rangle_t = 1 - t^2 + t^4 - t^6 + t^8 - \dots = 1/(1+t^2)
\]
while 
	\[\langle U\lambda : U\mu \rangle_t = 1 + t^2 - t^4 + t^6 - t^8 + \dots = 1 + t^2/(1 + t^2) 
\]
so that $\langle \lambda : \mu \rangle = 1/2$ while $\langle U\lambda : U\mu \rangle = 3/2$. 

\medbreak 

A full investigation of the bosonic pairing, for a one-particle space of arbitrary dimension, promises to be both useful and interesting.

\noindent
Department of Mathematics\\University of Florida\\Gainesville FL 32611
\medbreak
\noindent 
email:   paulr@ufl.edu

\end{document}